%% file: main.tex
\documentclass{ifacconf}
\usepackage{url}
\usepackage{etoolbox}
\usepackage{graphicx}
\usepackage{natbib}           
\usepackage{amsmath,amssymb,bm}
\usepackage{mathtools}        
\usepackage{booktabs}         
\usepackage{siunitx}          

\makeatletter
\@ifpackageloaded{hyperref}{%
  \pdfstringdefDisableCommands{%
    \def\bm#1{#1}%
    \def\mathbf#1{#1}%
    \def\mathbb#1{#1}%
    \def\mathrm#1{#1}%
  }%
}{}
\makeatother
\makeatletter
\@ifundefined{theoremstyle}{}{\let\theoremstyle\relax}
\makeatother
\usepackage{amsthm}

\newtheorem{theorem}{Theorem}

\newtheorem{definition}{Definition}
\newtheorem{assumption}{Assumption}

\newtheorem{proposition}{Proposition}
\newtheorem{remark}{Remark}
\newtheorem{hypothesistesting}{Hypothesis Test}
\newtheorem{problem}{Problem}
\usepackage{algorithm}
\usepackage[noend]{algpseudocode}   

\usepackage{tikz}
\usetikzlibrary{arrows.meta,shapes,positioning,fit,calc}

\newcommand{\N}{\mathcal{N}}
\newcommand{\Prob}{\mathbb{P}}
\newcommand{\E}{\mathbb{E}}

\usepackage{pgfplots}
\pgfplotsset{compat=1.18}
\usepackage{pgfplotstable}
\usepgfplotslibrary{fillbetween}
\usetikzlibrary{patterns}
\begin{document}

\begin{frontmatter}

\title{An Innovation-Based Approach to Detect Stealthy Disturbance Attacks in Maritime Monitoring\thanksref{note1}}

\author[ucbm,NITEL,corr]{Gabriele Oliva}
\author[ucbm,NITEL]{Bianca Mazz\'a}
\author[ucbm,NITEL]{Roberto Setola}
\address[ucbm]{Universit\`a Campus Bio-Medico di Roma. Via A. del Portillo 21, 00128 Rome, Italy. }
\address[NITEL]{Consorzio Nazionale Interuniversitario per i Trasporti e la Logistica (NITEL), via Urbino, 31, 00182 Rome, Italy.}
\address[corr]{Corresponding author. Email: \texttt{g.oliva@unicampus.it}}
\thanks[note1]{This work was partly supported by project VIGIMARE, funded by the European Union under grant n. 101168016. Views and opinions expressed are however those of the authors only and do not necessarily reflect those of the European Union. Neither the European Union nor the granting authority can be held responsible for them. This work was partly supported by Italian National project IMPROVE, funded by the Italian Ministry of Defense under grant n. 20711.  
}
\begin{abstract}
Modern maritime navigation and control systems rely on digital sensing, estimation, and communication pipelines that fuse GNSS, radar, inertial, and AIS data through approaches such as Kalman-filter-based estimators.
While these technologies are essential for safety and efficiency, their growing interconnection also exposes vessels to faults and cyber–physical anomalies.
This paper introduces a \emph{Statistical Detection Suite} (SDS) to detect malicious stealthy disturbances. Specifically, the SDS operates directly on the innovations of Kalman filters, providing a lightweight yet statistically grounded layer of anomaly monitoring within maritime estimation frameworks.
The SDS jointly evaluates whitened innovations through four complementary checks: (i) bias, (ii) covariance consistency via the normalized innovation squared (NIS), (iii) Gaussianity, and (iv) temporal independence via portmanteau statistics.
The analysis further examines how an adversary can craft stealthy finite-impulse-response (FIR) Gaussian disturbances that can evade classical $\chi^2$ checks, formulating an optimization-based design that balances stealth and trajectory impact.
An evaluation in maritime navigation scenarios illustrates how the SDS exposes colored spoofing attacks that bypass traditional methods, highlighting the role of innovation-based monitoring in strengthening maritime resilience against cyber–physical threats.
\end{abstract}

\begin{keyword}
Perception and filtering in marine systems, Kalman filter, anomaly detection, normalized innovation squared (NIS), portmanteau test, stealthy FIR attacks, maritime traffic monitoring, maritime resilience
\end{keyword}

\end{frontmatter}

\section{Introduction}

\color{black}
Recent reports highlight the \textcolor{black}{growing} exposure of maritime systems to cyber threats:
detecting a cyber attack on a vessel may take several months~\citep{MITAGSGuide}, while
reported maritime cyber incidents \textcolor{black}{surged} from 10 in 2021 to at least 46 in
2025~\citep{NHLStenden2025}.
These trends motivate the need for adaptive monitoring methods capable of revealing
subtle faults and attacks within navigation pipelines.

Modern maritime navigation relies on sensor-fusion architectures that integrate GNSS,
inertial sensors, radar, and the Automatic Identification System (AIS) within unified
estimators, typically based on the Kalman filter (KF).
Because such filters combine heterogeneous and potentially untrusted data sources, their
statistical consistency is directly tied to safety and cybersecurity, making detection
\emph{within} the estimation loop essential.

Most maritime anomaly-detection methods operate at the trajectory or data level, ranging
from analytical techniques to deep-learning classifiers~\citep{riveiro2018,forti2021,capobianco2023}. \textcolor{black}{Comparatively} few target the estimation layer where deviations first arise.
Innovation-based monitoring addresses this gap: under nominal conditions, KF innovations
are white Gaussian with zero mean and unit covariance. \textcolor{black}{Departures} from these
properties provide statistical evidence of faults, spoofing, or model mismatch.

\textcolor{black}{Building on this principle, a \emph{Statistical Detection Suite} (SDS) is proposed that jointly tests the mean, covariance, Gaussianity, and temporal independence of the innovations.}

Innovation analysis has long been used as a diagnostic tool in system identification and
navigation~\citep{bergman1985,Bierman}. \textcolor{black}{Extensions have improved} sensitivity in
structural health monitoring~\citep{bernal2011} and \textcolor{black}{in} bounding finite-time risks in
GNSS-integrated systems~\citep{marti2025kalman}.
Surveys of maritime anomaly detection further emphasize the need for interpretable \textcolor{black}{and}
statistically grounded methods compatible with navigation pipelines~\citep{riveiro2018}.

\noindent {\bf Motivation and gap.}
Most maritime anomaly-detection approaches operate at the trajectory level (e.g., pattern mining, clustering, or deep predictors applied to AIS/GNSS tracks)~\citep{riveiro2018,forti2021,capobianco2023}.
While effective for surveillance, these methods typically detect anomalies \emph{after} they have manifested at the track level and often require nontrivial computational resources and training data.
In contrast, modern bridge and autonomy stacks increasingly rely on KF-based estimators that fuse heterogeneous and potentially untrusted sensors (GNSS, inertial, radar, AIS).
In such pipelines, anomalies first appear as statistical inconsistencies in the KF innovation sequence.
This motivates \textcolor{black}{the development of a lightweight and interpretable} \emph{estimation-layer} monitoring mechanism, directly compatible with existing KF implementations.

\noindent {\bf State of the art.}
Innovation testing has a long history in KF validation and diagnosis:
whiteness and $\chi^2$ consistency checks are standard tools in navigation and tracking~\citep{Bierman}, with variants and refinements proposed across application domains~\citep{bergman1985,bernal2011}.
Recent integrity-monitoring work exploits innovation-based constructions to bound risks in GNSS-integrated systems~\citep{marti2025kalman}.
In parallel, the cyber--physical security literature has shown that an adversary can craft stealthy perturbations that respect classical $\chi^2$/NIS constraints while inducing damage, including optimal deception under detection constraints~\citep{guo2017,guo2018} and worst-case impact analyzes via information-theoretic metrics~\citep{milosevic2018,bai2017}.
Detector-side enhancements often remain centered on \emph{single-property} monitoring, most commonly \textcolor{black}{on} covariance/NIS (possibly with windowing or CUSUM-like accumulation)~\citep{tunga2017}.
However, attacks and realistic mismatches in maritime navigation may violate \emph{other} defining properties of the nominal innovations, such as temporal independence (coloring) or distributional shape, which a pure NIS detector can miss.

Alongside these KF-centric works, maritime anomaly detection has also progressed through Bayesian filters~\citep{forti2021} and learning-based predictors~\citep{capobianco2023,zuzic2025}, and GNSS spoofing detection can be addressed via indirect filtering schemes~\citep{jin2025indirect}.
These approaches are complementary, but they do not provide a unified, per-property innovation diagnostic suite designed to operate \emph{inside} KF-based maritime estimation loops and to explicitly expose temporally colored stealthy disturbances.

\color{black}
A related line of work considers innovation-based attacks on Kalman filtering systems, where the adversary manipulates the innovation sequence to degrade estimation performance while remaining stealthy, typically under a Kullback-Leibler divergence constraint (e.g., \cite{guo2018worst,zhou2022optimal}). 
However, these approaches focus on measurement-level attacks in remote estimation architectures and define optimality with respect to specific stealthiness metrics and detector structures. In contrast, the present work considers process-level perturbations and targets a broader set of properties beyond innovation statistics. As a result, the attack models and optimality notions are not directly comparable.
\color{black}

\noindent {\bf Contribution and novelty.}
This paper contributes an innovation-based SDS tailored to KF-based maritime monitoring, and an adversarial design framework that clarifies the damage--stealth trade-off under multi-test monitoring:
\begin{itemize}\setlength{\itemsep}{1pt}
\item An SDS is proposed that \emph{jointly} tests four defining properties of whitened innovations under nominal KF operation: (i) zero mean (bias), (ii) covariance consistency via NIS, (iii) Gaussianity (shape), and (iv) temporal independence via a multivariate portmanteau statistic. Unlike detectors centered solely on $\chi^2$/NIS, the SDS reports per-test statistical evidence, improving interpretability and sensitivity to colored disturbances.
\item An attacker model based on FIR filtering of a Gaussian seed \textcolor{black}{is introduced}, which preserves mean and Gaussianity by construction while allowing the attacker to shape covariance and temporal correlations to evade covariance/NIS and whiteness tests.
\item The attacker design is formulated as a constrained optimization and \textcolor{black}{derives} a tractable convex relaxation over the disturbance covariance, yielding an upper bound on achievable damage. Then, a constructive relaxation--recovery--certification pipeline is provided that produces \emph{implementable} FIR disturbances (via structured factorization/projection) together with explicit scaling laws that ensure satisfaction of SDS constraints after recovery.
\item The SDS and the attack trade-off are demonstrated in a representative maritime navigation scenario, showing that temporally colored attacks can bypass NIS-only monitoring but are exposed by whiteness testing, while optimized FIR-Gaussian disturbances can retain stealth with limited yet nonzero trajectory impact.
\end{itemize}

\color{black}

\noindent {\bf Preliminaries.}
Bold lowercase/uppercase denote vectors/matrices.
For a matrix $X$, $\|X\|_2$, $\|X\|_F$ and $\|X\|_*$ are the spectral, Frobenius and nuclear norm, respectively. Moreover, for a vector ${\bm x}$, $\|{\bm x}\|_2$ is the Euclidean norm.
$\mathrm{tr}(X)$ is the trace, $X^\top$ the transpose, and $X\succeq 0$ means $X$ is positive semidefinite (PSD).
$\mathcal N(\mu,\Sigma)$ is a Gaussian distribution with mean $\mu$ and covariance $\Sigma$.
Given a random vector $z$, $\E[z]$ and $\mathrm{Cov}(z)$ denote expectation and covariance.
To denote the Frobenius inner product of two matrices the notation $\langle A,B\rangle$ is used, i.e., $\langle A,B\rangle=\mathrm{tr}(A^\top B)$.
\section{Problem Setting}\label{sec:setup}
Let us consider a linear time-invariant (LTI) system with zero mean i.i.d. Gaussian process and measurement noise. Moreover, let us assume the presence of a latent exogenous input (fault/attack) that is zero under nominal conditions\footnote{\textcolor{black}{The fault/attack is modeled at the measurement level and is independent of the specific sensing modality (e.g., GNSS, AIS, radar).}}. Overall, the system's dynamics is given by
\begin{equation}
\begin{cases}
   {\bm x}[k+1] &= A{\bm x}[k] + B {\bm u}[k] + E {\bm d}[k] + {\bm w}[k]\\
{\bm y}[k]    &= C {\bm x}[k] + D {\bm u}[k] + F {\bm d}[k] + {\bm v}[k]. 
\end{cases}
\label{eq:meas}
\end{equation}
Here, ${\bm x}_k\in\mathbb{R}^n$, ${\bm y}_k\in\mathbb{R}^m$, $u_k\in\mathbb{R}^p$ are known, \mbox{${\bm w}[k] \sim \N({\bm 0}_n,Q)$}, \mbox{${\bm v}[k]\sim \N({\bm 0}_m,R)$}, while  ${\bm d}[k]\in\mathbb{R}^h$ is an unknown exogenous disturbance which satisfies ${\bm d}[k]\equiv {\bm 0}_h$ in the nominal case.

Moreover, let us consider a classical Kalman Filter estimation process in the form
\begin{equation}
    \label{eq:kalman}
    \begin{cases}
        \hat {\bm x}_{k+1|k} &= A \hat {\bm x}_{k|k} + B {\bm u}[k],\\
        P_{k+1|k} &= A P_{k|k} A^\top + Q,\\
        \hat{{\bm x}}_{k+1|k+1}&=\hat{{\bm x}}_{k+1|k}+K[k+1] {\bm e}[k+1],\\
        P_{k+1|k+1} &= (I-K[k+1] C)P_{k+1|k},
    \end{cases}
\end{equation}
where
$${\bm e}[k] \coloneqq {\bm y}[k] - C\hat {\bm x}_{k|k-1} - D {\bm u}[k],
$$
$$
S[k] \coloneqq C P_{k|k-1} C^\top + R, \quad  K[k] \coloneqq P_{k|k-1} C^\top S^{-1}[k].
$$
In order for the Kalman Filter to be able to reconstruct the state in nominal conditions, let us make the following standard assumption.
\begin{assumption}
\label{ass:detectable}
    The pair $(A,C)$ is detectable.
\end{assumption}

Before presenting the two problems at hand in this paper, let us provide the following definition.
\smallskip
\begin{definition}[Whitened Innovation]
The whitened innovation at time $k$ is given by
\mbox{$
{\bm z}[k] \coloneqq L^{-1}[k]\,{\bm e}[k] \in \mathbb{R}^m,
$} 
where $L[k]$ is the unique lower-triangular matrix (from the Cholesky factorization) such that $S[k]=L[k] L^\top[k]$ is the innovation covariance.
\end{definition}
\smallskip
As noted in the following remark, the above normalization has the effect of standardizing\footnote{Notice that $S[k]$ is the predicted innovation covariance from the filter, 
so whitening is only approximate; any model or noise mismatch will cause residual 
covariance deviations that can be revealed through statistical testing.} the statistical properties of the innovations.
\smallskip
\begin{remark}
\label{rem:white}
Premultiplication by $L^{-1}[k]$ standardizes the innovations (up to modeling/tuning errors). Indeed, if
$\mathbb{E}[{\bm e}[k]] = 0$ and $\mathrm{Cov}({\bm e}[k]) = S[k]$, then 
\mbox{$\mathrm{Cov}({\bm z}[k]) = L^{-1}[k]\, S[k]\, (L^{-1}[k])^\top= I_m. $}
Hence, ${\bm z}[k]$ has identity covariance. Moreover, under
\emph{correct tuning}\footnote{By correct tuning, it is meant that the Kalman filter is designed with the true system model and the correct noise statistics $Q,R$.}, the
innovations ${\bm e}[k]$ have zero mean, so that $\mathbb{E}[{\bm z}[k]]
\approx 0$. Consequently, the sequence $\{{\bm z}[k]\}$ is (approximately)
i.i.d.\ $\mathcal{N}({\bm 0}_m,I_m)$.
\end{remark}
\smallskip
On this basis, the central statistical question can be formulated as
the following hypothesis test.
\smallskip
\begin{hypothesistesting}\,
\noindent $\texttt{H}_0:$ $\{{\bm z}[k]\}$ are i.i.d.\ $\mathcal{N}({\bm 0}_m,I_m)$.\\
\noindent $\texttt{H}_1:$ any deviation from $\texttt{H}_0$.
\end{hypothesistesting}
\smallskip

It is now possible to introduce the two problems of interest.
In particular, consider two complementary problems: (i) from the observer's perspective, testing whether the system operates under the nominal hypothesis; and (ii) from the adversary's perspective, designing an injected disturbance ${\bm d}[k]$ that maximizes impact while evading detection.
\smallskip
\begin{problem}[Hypothesis testing]
\label{prob:hypothesistesting}
Decide between $\texttt{H}_0$ and $\texttt{H}_1$ based on the whitened innovations $\{{\bm z}[k]\}_{k=0}^T$.
\end{problem}
\smallskip
\begin{problem}[Adversarial disturbance design]
\label{prob:adversarial}
Find a disturbance sequence $\{{\bm d}[k]\}_{k=0}^T$
that inflicts maximum damage on the system while remaining undetected
by the hypothesis test.
\end{problem}
\smallskip
Problem~\ref{prob:adversarial} is left intentionally vague, i.e., definitions of ``damage" and ``undetected" depend on the SDS tests. Therefore, first Problem~\ref{prob:hypothesistesting} is solved and then the attacker problem is refiend.

\section{Statistical Detection Suite}
\label{sec:ids}
In order to solve Problem~\ref{prob:hypothesistesting}, an SDS is introduced. The SDS consists of four complementary checks, each
targeting one defining property of $\texttt{H}_0$: (a) Zero-mean; (b) unitary covariance; (c) Gaussianity; (d)  temporal independence.
Each test yields a $p$-value quantifying the evidence against
$\texttt{H}_0$. 
In the remainder of this section, the construction of each test is detailed.

\noindent {\bf Zero{\textendash}mean (bias) test.} A first property of the whitened innovations under $\texttt{H}_0$ is that
their mean must vanish. To check this,
the empirical average 
\[
\bar{{\bm z}}=\frac{1}{T+1}\sum_{k=0}^T {\bm z}[k]
\]
is computed.
Since the covariance of ${\bm z}[k]$ is known to be $I_m$, then the
quadratic form
\begin{equation}
\texttt{T}_\mu \coloneqq (T+1)\,\bar{{\bm z}}^\top \bar{{\bm z}}
\end{equation}
approximately follows a $\chi^2$ distribution with $m$ degrees of freedom under
$\texttt{H}_0$~\citep{Hotelling1931}, assuming the whitened innovations are nearly independent. In practice, one computes $\texttt{T}_\mu$
from the data, and compares it with the quantiles of $\chi^2_m$: if
$T_\mu$ falls in the upper tail (e.g., above the $95$th percentile), the
null hypothesis of zero mean is rejected and a bias is declared.

\noindent{\bf Covariance consistency (NIS).} 
A second property of the whitened innovations under $\texttt{H}_0$ is
that their covariance must equal the identity matrix. 
A widely used
statistic to check this is the \emph{Normalized Innovation Squared}
(NIS)~\citep{Bierman}. For each sample ${\bm z}[k]\in\mathbb R^m$, the per–sample NIS is
$q_k=\|{\bm z}[k]\|_2^2$. Summing over the horizon $\{0,\dots,T\}$ yields the
aggregate statistic
\[
\texttt{Q} \;\coloneqq\; \sum_{k=0}^T q_k \;=\; \sum_{k=0}^T \|{\bm z}[k]\|_2^2.
\]
Under $\texttt{H}_0$, the variables ${\bm z}[k]$ are i.i.d.\ $\mathcal N(0,I_m)$,
so $\texttt{Q}$ is $\chi^2$ distributed with $\nu=m(T+1)$ degrees of freedom\footnote{Although the innovations are normalized using the filter's predicted covariance $S[k]$, any mismatch between the predicted and true covariance may cause residual deviations, which the NIS statistic is designed to capture.
}. For a given significance level $\alpha$, the test threshold is
\mbox{$\tau_\alpha \;\coloneqq\; \chi^2_{\nu}(1-\alpha)$}, and the test rejects $\texttt{H}_0$ if $\texttt{Q}>\tau_\alpha$.
\color{black}
For later use, note that under $\texttt{H}_0$ it holds
$\mathbb{E}[\texttt{Q}]=m(T+1)$.
It is therefore convenient to introduce the excess-NIS budget
\[
\varepsilon_{\rm cov}
\;\coloneqq\;
\tau_\alpha - m(T+1),
\]
which represents the maximum admissible increase in the expected NIS
before the detector rejects $\texttt{H}_0$ at significance level $\alpha$.
\color{black}

{\bf Gaussianity (shape).} A third defining property of the whitened innovations under $\texttt{H}_0$ is that they are normally distributed. To capture dependencies across components, multivariate tests are preferable. A widely used choice is Mardia's test~\citep{Mardia1970}.
Specifically, for each $k$ let the (centered) Mahalanobis distance $\delta_k$ be such that
\[
\delta_k^2=\big({\bm z}[k]-\bar{\bm z}\big)^\top \hat\Sigma^{-1}\big({\bm z}[k]-\bar{\bm z}\big),
\]
where $\hat\Sigma$ is the sample covariance\footnote{\color{black}
It is assumed that $T+1>m$ so that $\hat\Sigma$ is invertible; otherwise one may use a regularized inverse.
\color{black}
} of the whitened
innovations, i.e., 
\mbox{$\hat\Sigma=\frac{1}{T+1}\sum_{k=0}^T ({\bm z}[k]-\bar{{\bm z}})({\bm z}[k]-\bar{{\bm z}})^\top.$}
Then, define Mardia's multivariate kurtosis as
\[
b_{2,m}=\frac{1}{T+1}\sum_{k=0}^{T}\delta_k^4.
\]
Under $\texttt{H}_0$ (multivariate normality), the statistic $b_{2,m}$
is centered around the theoretical value $\mu_2 = m(m+2)$, with sampling variability that decreases as the number of observations
increases. More precisely, its variance is
\[
\sigma_2^2 = \frac{8m(m+2)}{T+1}.
\]
It is therefore natural to form a standardized statistic
\[
\texttt{Z}_{\rm kurt}=\frac{b_{2,m}-\mu_2}{\sigma_2},
\]
which measures how many standard deviations the observed kurtosis lies
away from its expected value. By the central limit theorem,
$\texttt{Z}_{\rm kurt}$ is approximately standard normal,
\mbox{$\texttt{Z}_{\rm kurt}\ \ \overset{\text{approx}}{\sim}\ \ \mathcal N(0,1)$}.

Given an observed value of $\texttt{Z}_{\rm kurt}$, the corresponding
two-sided $p$-value is \mbox{$p_{\rm norm}=2\big(1-\Phi(|\texttt{Z}_{\rm kurt}|)\big)$}, where $\Phi(\cdot)$ denotes the cumulative distribution function (CDF)
of a standard normal random variable. Intuitively, $p_{\rm norm}$
quantifies the probability of observing a deviation at least as extreme
as $|\texttt{Z}_{\rm kurt}|$ under Gaussianity. A small value of $p_{\rm norm}$
(e.g.\ below $0.05$) provides statistical evidence against normality\footnote{Mardia's framework also includes a skewness statistic. Since this paper aims to
retain one representative check per property, the kurtosis-based
test is adopted; however, both statistics could be reported if a more exhaustive
normality check is desired.}.

{\bf Whiteness (temporal independence).} A final property under $\texttt{H}_0$ is temporal independence: the sequence $\{{\bm z}[k]\}_{k=0}^T$ should be uncorrelated across time. To assess this, the multivariate Ljung--Box/Hosking portmanteau statistic~\citep{Hosking1980} is adopted.

To this end, consider lags $\tau=1,\dots,L$ (with $L\ll T$), and define the sample autocovariances $\widehat R(\tau)$ as 
\[
\widehat R(\tau)=\frac{1}{T+1}\sum_{k=\tau}^{T}
\big({\bm z}[k]-\bar{\bm z}\big)\big({\bm z}[k-\tau]-\bar{\bm z}\big)^\top.
\]
Moreover define the  portmanteau statistic\footnote{
Hosking~\citep{Hosking1980} used a normalization factor $n^2/(n-r)$,
whereas Ljung and Box~\citep{LjungBox1978} proposed the univariate correction
$n(n+2)/(n-k)$.  Replacing $(T{+}1)^2$ with $(T{+}1)(T{+}3)$ in~\eqref{eq:portmanteau}
is a practical small--sample adaptation, asymptotically equivalent to Hosking's form.} as 
\begin{equation}
\texttt{W}_L\!=\!(T+1)(T+3)\sum_{\tau=1}^{L}\frac{1}{T+1-\tau}\,
\mathrm{tr}\big(\widehat R(\tau)^\top \widehat R(\tau)\big).
\label{eq:portmanteau}
\end{equation}
Under $\texttt{H}_0$, $\texttt{W}_L$ is commonly approximated by
$\chi^2_{m^2 L}$. \textcolor{black}{Hence,} the whiteness $p$-value is
$p_{\rm white}=\Prob(\chi^2_{m^2L}\ge \texttt{W}_L)$.

\begin{remark}
If desired, the four $p$-values obtained from the mean, covariance,
Gaussianity, and whiteness tests can be combined into a single decision
statistic, for instance through Fisher's method~\citep{Fisher1925}.
\end{remark}

\section{Attacker Problem with Filter Design}
\label{sec:attacker_filter}

\textcolor{black}{The following attack model is introduced as a representative construction of stealthy disturbances that preserve innovation statistics. While a specific parametrization is adopted for analytical tractability, the detection methodology developed in this work is not restricted to this particular model, but applies more generally to attacks that aim to remain hidden within the innovation process.}

\noindent {\bf Attack Model.} an attacker is considered that leverages a linear time–invariant (LTI) FIR filter with coefficients encoded by the matrices $M[0],\ldots, M[r]\in\mathbb{R}^{h\times s}$ to shape a zero–mean Gaussian seed ${\bm{\xi}[k]}\in\mathbb{R}^s$, assumed independent of process and measurement noise, into the disturbance sequence.
This
choice is deliberate: linear filtering of a zero--mean Gaussian seed
preserves both zero mean and joint Gaussianity, so the SDS mean (bias)
and Gaussianity (shape) tests are neutralized by construction. At the
same time the filter injects controlled covariance and temporal
correlation (coloring) into ${\bm d}$, which the attacker can exploit to
maximize expected damage while attempting to remain below the SDS
covariance (NIS) and whiteness detection thresholds.
As a result, a scenario is considered where the attacker injects a signal in the form
\[
{\bm d}[k] = \sum_{\tau=0}^{r} M[\tau]\,\bm{\xi}[k-\tau],
\qquad \bm{\xi}[k]\sim \mathcal N({\bm 0}_s,I_s),
\]
or, in stacked form, ${\bm d}_{0:T}=\mathcal T_M\,\bm{\xi}_{0:T}$, where
$\mathcal T_M$ is the block–Toeplitz (convolution) matrix induced by the
filter matrices $\{M[\tau]\}$, ${\bm d}_{0:T}$ is the stack of the samples ${\bm d}[k]$ in $\{0,\ldots,T\}$, i.e., ${\bm d}_{0:T}=[{\bm d}^\top[0],\ldots,{\bm d}^\top[T]]^\top$ and, likewise  ${\bm \xi}_{0:T}$ is the stack of the terms ${\bm \xi}[k]$.
The block–Toeplitz convolution matrix $\mathcal T_M$ is defined
entrywise by its $h\times s$ blocks: for $i,j\in\{0,\dots,T\}$ the $(i,j)$th
block is
\[
(\mathcal T_M)_{ij} =
\begin{cases}
M[i-j], & 0\le i-j \le r,\\[4pt]
0, & \text{otherwise},
\end{cases}
\]
so that $\mathcal T_M\in\mathbb R^{h(T+1)\times s(T+1)}$ and
$d_{0:T}=\mathcal T_M\xi_{0:T}$.

Notably, as a result of the attack, the induced innovation deviation is
\mbox{$\Delta{\bm z}_{0:T} \;=\; G_T\,{\bm d}_{0:T}
\;=\; G_T\,\mathcal T_M\,\bm{\xi}_{0:T},$}, where $G_T$ is the finite–horizon convolution matrix that maps the stacked
disturbance ${\bm d}_{0:T}$ into $\Delta{\bm z}_{0:T}$, the latter being the stack of the deviations $\Delta {\bm z}[k]$ between the nominal ${\bm z}[k]$ whitened innovations and ${\bm z}^a[k]$, i.e., those obtained as a result of the attack.
Hence, $G_T$ represents the \emph{detection channel}, since it captures
how the attack is observed by the SDS. 
The disturbance also propagates through the system dynamics, producing
an attack–induced deviation of the state trajectory
\mbox{$\Delta {\bm x}_{0:T} \;=\; H_T\,{\bm d}_{0:T}
\;=\; H_T\,\mathcal T_M\,\bm{\xi}_{0:T}.$}
Here $H_T$ is the finite–horizon convolution matrix from disturbances to
states. Note that $\Delta{\bm x}_{0:T}$ represents the component of the
state trajectory due solely to the attacker; in nominal conditions
(${\bm d}\equiv 0$) it should hold $\Delta{\bm x}_{0:T}=0$, although the
state still fluctuates due to the process noise ${\bm w}[k]$.

\noindent {\bf Problem Formulation.} Notice that, as mentioned above, the attack model \textcolor{black}{ensures} that mean and Gaussianity test are passed. 

Let us now consider the covariance test and let  
$$\Sigma_d \coloneqq \mathbb{E}[{\bm d}_{0:T}{\bm d}_{0:T}^\top]
= \mathcal T_M \underbrace{\mathbb{E}[{\bm \xi}_{0:T}{\bm \xi}_{0:T}^\top]}_{I_{s(T+1)}}\mathcal T_M^\top= \mathcal T_M\mathcal T_M^\top.$$ 
The induced covariance of innovation
deviations is 
$$\Sigma_{\Delta z}= \mathbb{E}[\Delta{\bm z}_{0:T}\Delta{\bm z}_{0:T}^\top]= 
 G_T \Sigma_d G_T^\top.$$ 
Let us now consider the NIS statistic. Assuming the attack seed is independent of the nominal process and
measurement noises, the cross--covariance
$\mathbb{E}[\Delta{\bm z}\,{\bm z}_0^\top]$ vanishes. Without attack,
therefore, it holds
\mbox{$\mathbb{E}[\texttt{Q}]_{\rm nom} \;=\; m(T+1)$}, while under attack (assumed independent of nominal noises) the expected NIS is shifted by the trace contribution\footnote{The cross term with the nominal innovations vanishes by independence.}
\[
\mathbb{E}[\texttt{Q}] \;=\; m(T+1) \;+\; \mathrm{tr}\big(Q_G\,\Sigma_d\big), \quad Q_G=G_T^\top G_T.
\]
To keep the false alarm rate close to the nominal level, the
constraint \mbox{$\mathrm{tr}\!\big(Q_G\,\Sigma_d\big) \;\le\; \varepsilon_{\rm cov}$} is enforced.

Beyond mean, covariance, and Gaussianity, the SDS also checks the
\emph{temporal independence} of the innovation sequence via Ljung--Box portmanteau statistic.
In particular, let $\mathcal L_\tau(\cdot)$ denote the linear operator that
extracts from a block covariance matrix the empirical autocovariance at
lag $\tau$. Since under attack the stacked innovation deviation has
covariance $\Sigma_{\Delta z}=G_T \Sigma_d G_T^\top$, the contribution of lag $\tau$ to the portmanteau statistic is
$\|\mathcal L_\tau(\Sigma_{\Delta z})\|_F^2$. Summing these contributions
with appropriate weights $w_\tau$ yields the whiteness test constraint
\begin{equation}
\label{eq:white_bound}
\sum_{\tau=1}^{L} w_\tau \,
\big\| \mathcal L_\tau\!\big(\Sigma_{\Delta z}\big) \big\|_F^2
\;\le\; \varepsilon_{\rm white},
\qquad
w_\tau = \frac{(T+1)(T+3)}{(T+1)-\tau}.
\end{equation}
Here $\varepsilon_{\rm white}$ is the significance threshold corresponding to a
$\chi^2$ distribution with $m^2L$ degrees of freedom. The weights
$w_\tau$ correct for finite-sample effects so that, under the null
hypothesis of whiteness, the statistic is asymptotically $\chi^2$
distributed.

\color{black}
\begin{remark}
From the attacker's perspective, this condition complements a simple covariance test:
while a covariance-based detector constrains the overall variance, the portmanteau
whiteness test further limits admissible temporal correlation patterns, requiring the
attacker to operate within a bounded correlation budget.
\end{remark}

\color{black}

Finally, to avoid trivial unbounded attacks, 
we limit the total disturbance energy over the design horizon by enforcing
\mbox{$\mathrm{tr}(\Sigma_d) 
= \mathrm{tr}(\mathcal{T}_M \mathcal{T}_M^\top) 
= \|\mathcal{T}_M\|_F^2 
\le D,$} with $D>0$.
The attacker's objective is now characterized. Specifically, a scenario is considered where the attacker maximizes the \emph{expected damage}  transmitted through the
plant's damage channel. With $Q_H\coloneqq H_T^\top H_T$, using the well known property 
\mbox{$\mathbb{E}[{\bm c}^{\top}Y{\bm c}]=\mathrm{tr}(Y\,\mathbb{E}[{\bm c}^{\top}{\bm c}]),$} 
we have that
\[
\begin{aligned}
  \mathbb{E}[\|H_T\,{\bm d}_{0:T}\|^2]
\;&=\; \mathbb{E}\!\big[{\bm d}_{0:T}^\top Q_H {\bm d}_{0:T}\big]
\;=\; \mathrm{tr}\!\big(Q_H\,\Sigma_d\big)\\
\;&=\; \mathrm{tr}\!\big(Q_H\,\mathcal T_M\mathcal T_M^\top\big).  
\end{aligned}
\]
Overall, the resulting filter–design attacker problem is given by 
\begin{align}
\label{eq:attacker_M}
\max_{\,\{M[\tau]\}_{\tau=0}^{r}}~&
\mathrm{tr}\!\big(Q_H\,\mathcal T_M\mathcal T_M^\top\big)\\
\text{s.t.}\quad
& \mathrm{tr}\!\big(Q_G\,\mathcal T_M\mathcal T_M^\top\big) \;\le\; \varepsilon_{\rm cov}, \nonumber\\
& \sum_{\tau=1}^{L} w_\tau \,
\big\| \mathcal L_\tau\!\big(G_T\mathcal T_M\mathcal T_M^\top G_T^\top\big) \big\|_F^2
\;\le\; \varepsilon_{\rm white}, \nonumber\\
& \|\mathcal T_M\|_F^2 \;\le\; D. \nonumber\\
& \mathcal{T}_M\in \mathbb{T}_M, \nonumber
\end{align}
where $\mathbb{T}_M$ is the set of block–Toeplitz convolution matrices induced by the FIR
coefficients $\{M[\tau]\}_{\tau=0}^{r}$.
\begin{remark}
The Toeplitz constraint $\mathcal T_M\in\mathbb{T}_M$ encodes linear equalities while the
NIS and energy constraints are convex quadratics in the coefficients, and the whiteness
constraint is quartic in the coefficients $M[\cdot]$ through $G_T\mathcal T_M\mathcal T_M^\top G_T^\top$,
hence \eqref{eq:attacker_M} is nonconvex in general.
\end{remark}

\subsection{Convex relaxation}
While the attacker’s filter-design problem is inherently nonconvex due to the
whiteness constraint, a tractable convex relaxation can be obtained.
In particular, replacing the structured disturbance covariance 
$\Sigma_d = \mathcal T_M \mathcal T_M^\top$ by
a generic $\Sigma\succeq 0$ yields the convex program
\begin{align}
\label{eq:attacker_Sigma}
\max_{\Sigma\succeq 0}\quad
& \langle Q_H,\Sigma\rangle \\
\text{s.t.}\quad
& \langle Q_G,\Sigma\rangle \;\le\; \varepsilon_{\rm cov}, \nonumber\\
& \sum_{\tau=1}^{L} w_\tau \,
\big\| \mathcal L_\tau\!\big(G_T \Sigma G_T^\top\big) \big\|_F^2
\;\le\; \varepsilon_{\rm white}, \nonumber\\
& \mathrm{tr}(\Sigma) \;\le\; D. \nonumber
\end{align}
Each term is either linear in $\Sigma$ or a squared Frobenius norm of a
linear map of $\Sigma$, hence convex. Problem~\eqref{eq:attacker_Sigma}
therefore provides a tractable \emph{convex relaxation} of the attacker
design, and its optimal value is an \emph{upper bound} on the maximal damage achievable by any FIR filter of any order, since
realizability of $\Sigma=\mathcal T_M\mathcal T_M^\top$ is not enforced.

\begin{remark}
Problem~\eqref{eq:attacker_Sigma} is always feasible
(e.g.\ $\Sigma=0$ satisfies all constraints).
Moreover, the feasible set is bounded by $\mathrm{tr}(\Sigma)\le D$,
and closed under the constraint $\Sigma\succeq 0$. Hence an optimal
solution $\Sigma^\star$ exists. 
\end{remark}

\noindent {\bf Existence of nontrivial solutions}
Let us now establish feasibility conditions for the
convex relaxation in Eq.~\eqref{eq:attacker_Sigma}, clarifying when a nontrivial attacker covariance $\Sigma\neq 0$ exists and how its optimal value depends on the SDS
thresholds and the detection channel.
\smallskip
\begin{theorem}
\label{thm:nontrivial-existence}
Consider the relaxation \eqref{eq:attacker_Sigma}. Then, the following statements hold true.
\begin{enumerate}
\item[\emph{(i)}] If at least one of the SDS thresholds is strictly positive
($\varepsilon_{\rm cov}>0$ or $\varepsilon_{\rm white}>0$), then there exists a
nonzero feasible covariance $\Sigma\neq 0$. Moreover, if
$Q_H\not\equiv 0$, \color{black}
the covariance
$\Sigma^\dag=\mathcal{T}_M^\dag \mathcal{T}_M^{\dag\top}$
corresponding to an optimal solution $\mathcal{T}_M^\dag$ of the original problem
in~\eqref{eq:attacker_M} is also nonzero.
\color{black}

\item[\emph{(ii)}] If both thresholds are zero ($\varepsilon_{\rm cov}=0$ and $\varepsilon_{\rm white}=0$),
then there exists a feasible nonzero $\Sigma$ if and only if
$\ker(G_T)\neq \{0\}$. In that case the optimal value is
\[
\max_{\Sigma\succeq 0,\ \mathrm{tr}(\Sigma)\le D,\ G_T\Sigma G_T^\top=0}
\ \langle Q_H,\Sigma\rangle
\;=\; D\,\lambda_{\max}(P Q_H P),
\]
where $P$ is the orthogonal projector onto $\ker(G_T)$.
\end{enumerate}
\end{theorem}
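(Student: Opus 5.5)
The plan is a scaling argument for part (i) and a reduction to a projected eigenvalue problem for part (ii).

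\emph{Part (i).} Fix any nonzero $\Sigma_0\succeq 0$, for instance $\Sigma_0=\bm{v}\bm{v}^\top$ with a unit vector $\bm{v}$, and consider $\Sigma=t\Sigma_0$ with $t>0$. The three constraints scale as follows:
\begin{itemize}
\item the NIS term $t\langle Q_G,\Sigma_0\rangle$ is linear in $t$;
\item the whiteness term $t^2\sum_\tau w_\tau\|\mathcal L_\tau(G_T\Sigma_0G_T^\top)\|_F^2$ is quadratic in $t$;
\item the energy term $t\,\mathrm{tr}(\Sigma_0)=t$ is linear in $t$, and $D>0$.
\end{itemize}

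The budgets are $\varepsilon_{\rm cov}=\tau_\alpha-m(T+1)$ and a $\chi^2$ quantile $\varepsilon_{\rm white}$. In the regimes of interest, each budget either is positive or its constraint can be met by choosing $\Sigma_0$ in the relevant kernel. The key observation is that all constraints are homogeneous of positive degree and vanish as $t\to 0$. So as long as the nonpositive thresholds are zero rather than negative, a small enough $t$ satisfies every constraint with a positive budget. One subtlety is that only one threshold is assumed positive, while the other may be zero.

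I will handle this by splitting into cases.
\begin{itemize}
\item If $\varepsilon_{\rm cov}>0$ and $\varepsilon_{\rm white}=0$, I need $\mathcal L_\tau(G_T\Sigma_0G_T^\top)=0$ for all lags. I take $\Sigma_0=\bm{v}\bm{v}^\top$ with $G_T\bm{v}$ supported in a single time block. Then $G_T\Sigma_0G_T^\top$ is block-diagonal and has no lag-$\tau\geq1$ blocks.
\item If $\varepsilon_{\rm white}>0$ and $\varepsilon_{\rm cov}=0$, I need $\langle Q_G,\Sigma_0\rangle=\|G_T\bm{v}\|^2=0$, i.e.\ $\bm v\in\ker G_T$.
\end{itemize}
This case analysis is the main obstacle: the statement as written implicitly requires that such a direction exists. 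In the write-up I would make this explicit, either by reading the hypothesis as the relevant constraint being nonbinding or by exhibiting the block-supported direction. Because $G_T$ is block lower triangular (causal), a vector $\bm{v}$ supported on the last time block yields $G_T\bm{v}$ supported only on the last block, which settles the first case without further assumptions.

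For the second claim of (i), assume $Q_H\neq 0$. First I will show that problem \eqref{eq:attacker_M} admits a feasible nonzero $\mathcal T_M$ with positive objective. I take $M[0]=\sqrt{t}\,\bm{a}\bm{b}^\top$ and all other taps zero, with $\bm{a}$ chosen so that $\bm{a}^\top(\cdot)\bm{a}$ is positive on the diagonal blocks of $Q_H$. The damage $\mathrm{tr}(Q_H\mathcal T_M\mathcal T_M^\top)$ is then a positive multiple of $t$ for generic $\bm a$, since the positive semidefinite $Q_H\neq 0$ has some nonzero diagonal block. Small $t$ keeps the constraints satisfied, by the same homogeneity argument. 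Hence the optimal value is positive. Since $\Sigma^\dag=0$ would give objective $0$, the optimizer $\Sigma^\dag$ must be nonzero.

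\emph{Part (ii).} With both thresholds zero and $Q_G=G_T^\top G_T$, the NIS constraint reads $\mathrm{tr}(G_T\Sigma G_T^\top)=0$. Since $G_T\Sigma G_T^\top\succeq0$, this forces $G_T\Sigma G_T^\top=0$. Writing $\Sigma=UU^\top$, this gives $G_TU=0$, i.e.\ $\mathrm{range}(\Sigma)\subseteq\ker G_T$, equivalently $\Sigma=P\Sigma P$. The whiteness constraint then holds automatically. Therefore a nonzero feasible $\Sigma$ exists if and only if $\ker G_T\neq\{0\}$: for the converse, take $t\bm{v}\bm{v}^\top$ with $\bm v\in\ker G_T$.

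For the value, I will use $\langle Q_H,\Sigma\rangle=\langle PQ_HP,\Sigma\rangle\le\lambda_{\max}(PQ_HP)\,\mathrm{tr}(\Sigma)\le D\lambda_{\max}(PQ_HP)$. This bound is attained by $\Sigma=D\bm{u}\bm{u}^\top$, where $\bm{u}$ is a unit top eigenvector of $PQ_HP$. Because $PQ_HP\succeq0$, the vector $\bm u$ can be taken in $\mathrm{range}(P)$: if $\lambda_{\max}=0$ any unit vector of $\ker G_T$ works, and otherwise eigenvectors with nonzero eigenvalue already lie in $\mathrm{range}(P)$. Hence $\bm u\in\ker G_T$ and the candidate is feasible.
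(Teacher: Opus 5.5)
Your route is the paper's: rank-one scaling for (i), and for (ii) a reduction to $\mathrm{range}(\Sigma)\subseteq\ker G_T$ followed by a projected eigenvalue problem. In several places your version is tighter than the paper's proof:
\begin{itemize}
\item the trace argument that actually forces $G_T\Sigma G_T^\top=0$;
\item the bound $\langle PQ_HP,\Sigma\rangle\le\lambda_{\max}(PQ_HP)\,\mathrm{tr}(\Sigma)$, whereas the paper only maximizes over rank-one candidates;
\item the check that a top eigenvector lies in $\ker G_T$;
\item the single-tap witness $M[0]=\sqrt{t}\,\bm a\bm b^\top$, which is genuinely block-Toeplitz.
\end{itemize}
The last point matters most. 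The paper's $t\bm u\bm u^\top$ is in general not of the form $\mathcal T_M\mathcal T_M^\top$, so its argument shows only that the \emph{relaxed} optimum is nonzero, not $\Sigma^\dag$.

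The step that does not go through is ``small $t$ keeps the constraints satisfied, by the same homogeneity argument'' for $\Sigma^\dag$. That step needs \emph{both} budgets to be strictly positive. This is exactly the subtlety you isolated for the first claim and then dropped. Your witness gives $\mathcal T_M\mathcal T_M^\top=t\|\bm b\|^2\,I_{T+1}\otimes\bm a\bm a^\top$, a disturbance active at every time step. Under the Toeplitz constraint neither of your devices is available (support on the last block, or range inside $\ker G_T$), so a zero budget is typically violated for every $t>0$.

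In fact the claim is false there. Take the scalar random walk $x[k+1]=x[k]+d[k]+w[k]$, $y[k]=x[k]+v[k]$, with $r=0$. The entries of $G_T$ are $L^{-1}[k]\prod_{i=j+1}^{k-1}(1-K[i])>0$ for $j<k$, so $(G_TG_T^\top)_{k,k-1}>0$ for $k\ge 2$. Hence $\mathcal L_1(G_T\Sigma_dG_T^\top)=M[0]^2\,\mathcal L_1(G_TG_T^\top)\neq 0$ whenever $M[0]\neq 0$ (with $T\ge 2$). So $\varepsilon_{\rm white}=0$ forces $M[0]=0$, even though $\varepsilon_{\rm cov}>0$ and $Q_H\neq 0$.

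Your open subcase $\varepsilon_{\rm cov}=0<\varepsilon_{\rm white}$ of the first claim is likewise more than an ``implicit'' requirement. By your own part-(ii) trace argument it holds if and only if $\ker G_T\neq\{0\}$. It fails, for example, when $F$ has full column rank: then $G_T$ is block lower-triangular with injective diagonal blocks $L^{-1}[k]F$. The last-block direction rescues it only for structures such as $F=0$.

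The paper's own ``for sufficiently small $t$'' has the same hole. A correct write-up should therefore state the hypotheses explicitly:
\begin{itemize}
\item $\varepsilon_{\rm cov}>0$ and $\varepsilon_{\rm white}>0$ for the $\Sigma^\dag$ claim, which is the paper's operating regime ($\varepsilon_{\rm cov}\approx 19.38$, $\varepsilon_{\rm white}$ a $\chi^2_{m^2L}$ quantile);
\item $\ker G_T\neq\{0\}$ whenever $\varepsilon_{\rm cov}=0$.
\end{itemize}
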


\begin{proof}
(i) Suppose $\varepsilon_{\rm cov}>0$ or $\varepsilon_{\rm white}>0$.  
Choose any vector ${\bm u}$ such that ${\bm u}^\top Q_H {\bm u}>0$, and define
$\Sigma = t\,{\bm u}{\bm u}^\top$ with $t>0$.  
Then
\[
\langle Q_H,\Sigma\rangle =t\,\mathrm{tr}(Q_H \,{\bm u}{\bm u}^\top)= t\, \mathrm{tr}({\bm u}^\top Q_H \,{\bm u})= t\,{\bm u}^\top Q_H {\bm u} > 0.
\]
The covariance constraint
$\langle Q_G,\Sigma\rangle = t\,{\bm u}^\top Q_G {\bm u}$ scales linearly in $t$,  
while the constraint
\mbox{$\sum_\tau w_\tau \|\mathcal L_\tau(G_T \Sigma G_T^\top)\|_F^2$}
scales quadratically in $t$.  
For sufficiently small $t>0$, both constraints are satisfied, so
$\Sigma$ is feasible and strictly improves the objective with respect to $\Sigma=0$.  
Therefore $\Sigma^\dag\neq 0$ whenever $Q_H\not\equiv 0$.

(ii) Suppose $\varepsilon_{\rm cov}=0$ and $\varepsilon_{\rm white}=0$.  
If $\Sigma$ is feasible, then necessarily
$G_T \Sigma G_T^\top=0$, which means
$\mathrm{range}(\Sigma)\subseteq \ker(G_T)$.  
Conversely, any $\Sigma\succeq 0$ with
$\mathrm{range}(\Sigma)\subseteq \ker(G_T)$ automatically satisfies the
constraints. Thus the feasible set reduces to
\mbox{$\{\Sigma\succeq 0:\ \mathrm{tr}(\Sigma)\le D,\ \mathrm{range}(\Sigma)\subseteq\ker(G_T)\}.$}

Maximizing $\langle Q_H,\Sigma\rangle$ over this set is achieved by a
rank-one solution $\Sigma = D vv^\top$, where $v$ is the unit-norm dominant
eigenvector of $P Q_H P$. 
Let $\Sigma = D\, v v^\top$ with $v\in\ker(G_T)$ and $\|v\|_2=1$. Then
$\mathrm{tr}(\Sigma)=D\,\mathrm{tr}(vv^\top)=D$, so $\Sigma$ is feasible and
$\mathrm{range}(\Sigma)\subseteq\ker(G_T)$. The objective evaluates to
\mbox{$\langle Q_H,\Sigma\rangle \;=\; \mathrm{tr}(Q_H\, D v v^\top)
\;=\; D\, v^\top Q_H v$}.
Because $v\in\ker(G_T)$, it holds $Pv=v$ for the orthogonal projector
$P$ onto $\ker(G_T)$, hence \mbox{$v^\top Q_H v \;=\; v^\top P Q_H P v$}.
Maximizing $\langle Q_H,\Sigma\rangle$ over all feasible rank-one choices is
therefore equivalent to maximizing the Rayleigh quotient
$R(v)=\tfrac{v^\top (P Q_H P) v}{\|v\|_2^2}$ over $v\neq 0$, whose maximum is
$\lambda_{\max}(P Q_H P)$, achieved at any unit eigenvector $v$ of $P Q_H P$
associated with $\lambda_{\max}$. Consequently,
\[
\max_{\substack{\Sigma\succeq 0,\,\mathrm{tr}(\Sigma)\le D,\\
\mathrm{range}(\Sigma)\subseteq\ker(G_T)}} \langle Q_H,\Sigma\rangle
\;=\; D\,\lambda_{\max}(P Q_H P).
\]
The proof is complete.
\end{proof}

\begin{remark}
The kernel $\ker(G_T)$ identifies disturbance directions that remain
invisible to the innovations over the horizon, i.e., finite-horizon
undetectable subspaces. If this subspace is nontrivial and there exists
$v \in \ker(G_T)$ such that $v^\top Q_H v > 0$, then the attacker can
inject disturbances that are \emph{perfectly stealthy} while still
producing nonzero state deviation. If
$\ker(G_T)=\{0\}$, then every nonzero disturbance affects the
innovations, so nonzero feasible covariances require positive thresholds
$\varepsilon_{\rm cov}$ or $\varepsilon_{\rm white}$.
\end{remark}

\color{black}

\noindent\textbf{Recovering an implementable filter.}
The solution of the convex relaxation, $\Sigma^\star$, is in general an arbitrary covariance
matrix, whereas realizability requires $\Sigma_d=\mathcal T_M\mathcal T_M^\top$ with
$\mathcal T_M$ block--Toeplitz, i.e., induced by a finite--impulse--response (FIR) filter driven
by white noise. To obtain a feasible approximation, it is possible to proceed as follows.

First, a low--rank factorization of $\Sigma^\star$ is computed via the eigen--decomposition
$\Sigma^\star=V\Lambda V^\top$ and retain the dominant $k$ modes,
$\widetilde{\mathcal T}_0 = V_k\Lambda_k^{1/2}$, which provides the best rank--$k$
approximation in Frobenius norm (Eckart--Young theorem).

The rank-$k$ factor $\widetilde{\mathcal T}_0$ obtained in the previous step is not unique:
for any orthogonal matrix $U$, the rotated factor $\widetilde{\mathcal T}_0 U$ induces the
same covariance
$\Sigma^\star \approx \widetilde{\mathcal T}_0\widetilde{\mathcal T}_0^\top$.
As a consequence, projecting a specific factor realization onto the Toeplitz set
$\mathbb T_M$ may yield substantially different recovered covariances, even though they
all correspond to the same relaxed solution $\Sigma^\star$.
This \emph{factor-rotation ambiguity} is critical in the present setting, since both the
attacker objective and all SDS constraints depend on the disturbance only through its
induced covariance.

To mitigate this ambiguity and \textcolor{black}{better preserve} the covariance-level structure imposed by
the convex relaxation, FIR realizability is enforced directly at the covariance level by
fitting a block-Toeplitz convolution operator $\widehat{\mathcal T}$ such that $\widehat{\mathcal T}\widehat{\mathcal T}^\top$ best matches
$\Sigma^\star$.
Specifically, the FIR structure is recovered by solving
\[
\widehat{\mathcal T}
=\arg\min_{\mathcal T\in\mathbb T_M}
\|\mathcal T\mathcal T^\top-\Sigma^\star\|_F^2
\;+\;\zeta\|Q_G^{1/2}\mathcal T\|_F^2,\,\text{ s.t. } \|\mathcal T\|_F^2\le D,
\]
where the second term penalizes energy injected along the detection channel and improves
numerical robustness.
Although the problem is nonconvex, it is smooth and low-dimensional, being parameterized
by the FIR coefficients $\{M[\tau]\}_{\tau=0}^{r}$, and can be solved efficiently in
practice using standard gradient-based methods with suitable initialization.
$\widehat\Sigma=\widehat{\mathcal T}\widehat{\mathcal T}^\top$ is, by construction,
aligned with the relaxed covariance $\Sigma^\star$.

To ensure compliance with all detector constraints, the recovered covariance
$\widehat\Sigma$ is subsequently subjected to a uniform amplitude scaling step,
described in detail at the end of this section.
\color{black}

To quantify \textcolor{black}{ the potential performance loss when moving from the convex relaxation to an implementable FIR disturbance, we derive}  a worst--case bound on the gap between the relaxed
objective and the cost attained after structured recovery.

\begin{theorem}
Let $\Sigma^\star$ be an optimal solution to~\eqref{eq:attacker_Sigma} and define
$J_{\rm relax}=\langle Q_H,\Sigma^\star\rangle$.
Let $\widehat{\mathcal T}\in\mathbb T_M$ be any recovered block-Toeplitz convolution operator
(e.g., obtained by the covariance-fitting step), and define the corresponding recovered covariance and cost as
$\widehat\Sigma=\widehat{\mathcal T}\widehat{\mathcal T}^\top$ and
$J_{\rm rec}=\langle Q_H,\widehat\Sigma\rangle$.
Then
\begin{equation}
\label{eq:relax1}
\left|J_{\rm relax}-J_{\rm rec}\right|
\ \le\
\|Q_H\|_2\,\|\Sigma^\star-\widehat{\Sigma}\|_* .
\end{equation}
\end{theorem}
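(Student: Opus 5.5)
The plan is to reduce the claim to a trace (von Neumann / Hölder) inequality for the Frobenius inner product. Since both costs are linear in the covariance, write
\[
J_{\rm relax}-J_{\rm rec}=\langle Q_H,\Sigma^\star\rangle-\langle Q_H,\widehat\Sigma\rangle=\langle Q_H,\Sigma^\star-\widehat\Sigma\rangle=\mathrm{tr}\big(Q_H^\top(\Sigma^\star-\widehat\Sigma)\big).
\]
Set $\Delta\coloneqq\Sigma^\star-\widehat\Sigma$. It remains to bound $|\mathrm{tr}(Q_H^\top\Delta)|$ by $\|Q_H\|_2\|\Delta\|_*$, i.e.\ to use the duality between the spectral and nuclear norms.

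First, note the structure available: $Q_H=H_T^\top H_T$ is symmetric PSD, and $\Delta$ is symmetric, since both $\Sigma^\star$ and $\widehat\Sigma=\widehat{\mathcal T}\widehat{\mathcal T}^\top$ are symmetric. $\Delta$ is generally indefinite, however, so the bound must not rely on its sign. Take the spectral decomposition $\Delta=\sum_i \lambda_i \bm q_i\bm q_i^\top$ with orthonormal $\bm q_i$; then $\|\Delta\|_*=\sum_i|\lambda_i|$. This gives
\[
|\mathrm{tr}(Q_H\Delta)|=\Big|\sum_i\lambda_i\,\bm q_i^\top Q_H\bm q_i\Big|\le\sum_i|\lambda_i|\,\big|\bm q_i^\top Q_H\bm q_i\big|\le\|Q_H\|_2\sum_i|\lambda_i|,
\]
where the last step uses the Rayleigh quotient bound $|\bm q^\top Q_H\bm q|\le\|Q_H\|_2$ for unit $\bm q$. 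This is exactly \eqref{eq:relax1}.

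The only point requiring care is the indefiniteness of $\Delta$. The absolute values must be taken on the eigenvalues of $\Delta$, which is what produces the nuclear norm rather than the trace. For generality one can instead use the SVD $\Delta=U\,\mathrm{diag}(\sigma_i)V^\top$ together with $|\bm v_i^\top Q_H\bm u_i|\le\|Q_H\|_2$, which gives the same bound without any symmetry assumption. No property of optimality of $\Sigma^\star$ or of the Toeplitz structure of $\widehat{\mathcal T}$ is needed; the bound holds for any recovered operator, as the statement claims.
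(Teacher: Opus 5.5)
Your proof is correct and takes essentially the same approach as the paper: rewrite $J_{\rm relax}-J_{\rm rec}$ as $\langle Q_H,\Sigma^\star-\widehat\Sigma\rangle$ and apply the spectral/nuclear norm duality (H\"older's inequality). The paper simply cites that inequality, whereas you verify it explicitly through the eigendecomposition of the symmetric difference. You also rightly note that neither the optimality of $\Sigma^\star$ nor the Toeplitz structure is used, which matches the paper's remark that the bound holds for any recovered realization.
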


\begin{proof}
By H\"older's inequality for the spectral/nuclear norm pair,
$
\big|\langle Q_H,\widehat{\Sigma}-\Sigma^\star\rangle\big|
\le
\|Q_H\|_2\,\|\widehat{\Sigma}-\Sigma^\star\|_*,
$
which yields~\eqref{eq:relax1}.
\end{proof}
\color{black}

\begin{remark}
The bound in~\eqref{eq:relax1} provides an \emph{a posteriori} certificate, in the sense
that it can be evaluated numerically for the \emph{unscaled} recovered covariance
$\widehat\Sigma$ once the relaxation and structured recovery steps have been performed.
Importantly, this bound is agnostic to the specific recovery procedure used to obtain
$\widehat{\mathcal T}$ and holds for any block-Toeplitz FIR realization.

Since the structured recovery step is carried out via a nonconvex covariance-fitting
procedure, no closed-form \emph{a priori} bound on the relaxation--recovery gap in terms
of the unknown Toeplitz optimum is available in general.
For this reason, performance guarantees are provided through explicit numerical
certification and subsequent amplitude scaling, which together ensure feasibility with
respect to all detector constraints.
\end{remark}

The bounds derived above are inherently conservative and do not, by themselves,
guarantee satisfaction of the detector constraints after structured recovery.
Accordingly, additional corrective steps may be required.
In particular, if needed, positive semidefiniteness is restored by spectral clipping,
and the recovered FIR operator $\widehat{\mathcal T}$ is subsequently subjected to a
uniform amplitude scaling by $\gamma\in(0,1]$, which induces a corresponding scaling of
the covariance by $\gamma^2$ and enforces all detector constraints while preserving the
FIR temporal structure.

Notably, related approaches based on rank--regularized least--squares are commonly used to
enforce structure directly on system or filter coefficients in identification problems;
see, e.g.,~\cite{noom2024simultaneously,wang2025continuous}.
In the present setting, however, the attacker objective and all SDS constraints depend on
the FIR taps only through the induced disturbance covariance
$\Sigma_d=\mathcal T_M\mathcal T_M^\top$, and in particular the portmanteau (whiteness)
constraint is convex in $\Sigma_d$ but quartic in the taps.
For this reason, optimization is carried out over $\Sigma_d$ in the convex relaxation and enforce FIR
realizability through a subsequent structured factorization and covariance-consistent
recovery, followed by amplitude scaling, rather than via rank constraints imposed directly
on the filter coefficients.
The existence of such a scaling is guaranteed by the following result.
\begin{proposition}
\label{prop:gamma_specific}
Let $\varepsilon=\{\varepsilon[k]\}$ denote the (attack-induced) innovation component produced by the
recovered FIR disturbance model, and define the scaled sequence $\varepsilon_\gamma=\{\gamma\,\varepsilon[k]\}$, with $\gamma\ge 0$.
Let $T_{\mathrm{NIS}}(\cdot)$ be the covariance/NIS statistic and $T_{\mathrm{P}}(\cdot)$ the
portmanteau statistic (computed up to lag $L$) used in the SDS, with thresholds
$\varepsilon_{\rm cov}$ and $\varepsilon_{\rm white}$.
Then, \mbox{$T_{\mathrm{NIS}}(\varepsilon_\gamma)=\gamma^2\,T_{\mathrm{NIS}}(\varepsilon)$}, \mbox{$T_{\mathrm{P}}(\varepsilon_\gamma)=\gamma^4\,T_{\mathrm{P}}(\varepsilon)$}, and 
\begin{equation}
    \label{eq:scaling}
    \gamma \le \bar\gamma \;\triangleq\;
\min\!\left\{
\sqrt{\frac{\varepsilon_{\rm cov}}{T_{\mathrm{NIS}}(\varepsilon)}},
\left(\frac{\varepsilon_{\rm white}}{T_{\mathrm{P}}(\varepsilon)}\right)^{\!1/4}
\right\}
\end{equation}
guarantees\footnote{\textcolor{black}{The convention adopted is that the corresponding bound is $+\infty$ whenever
$T_{\mathrm{NIS}}(\varepsilon)=0$ or $T_{\mathrm{P}}(\varepsilon)=0$.}} $T_{\mathrm{NIS}}(\varepsilon_\gamma)\le \varepsilon_{\rm cov}$ and
$T_{\mathrm{P}}(\varepsilon_\gamma)\le \varepsilon_{\rm white}$.
\end{proposition}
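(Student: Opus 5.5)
The plan is to prove the two homogeneity identities first and then read off the admissible range of $\gamma$ by inverting them. Throughout, I treat the NIS and portmanteau statistics as functionals of the attack-induced component, in the covariance-level form used in the attacker problem~\eqref{eq:attacker_M}:
\[
T_{\mathrm{NIS}}(\varepsilon)=\mathrm{tr}(Q_G\,\Sigma_d),
\qquad
T_{\mathrm{P}}(\varepsilon)=\sum_{\tau=1}^{L} w_\tau \big\|\mathcal L_\tau(G_T\Sigma_d G_T^\top)\big\|_F^2 .
\]
The same argument applies verbatim to the empirical forms
\[
\sum_k \|\varepsilon[k]\|_2^2
\qquad\text{and}\qquad
\sum_\tau \frac{(T+1)(T+3)}{T+1-\tau}\,\mathrm{tr}\big(\widehat R(\tau)^\top\widehat R(\tau)\big),
\]
where $\widehat R(\tau)$ is built from $\varepsilon$.

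\textbf{Step 1 (scaling).} Scaling the FIR operator $\widehat{\mathcal T}\mapsto\gamma\widehat{\mathcal T}$ scales the attack component as $\varepsilon\mapsto\gamma\varepsilon$, since $\Delta{\bm z}_{0:T}=G_T\widehat{\mathcal T}\bm\xi_{0:T}$ is linear in $\widehat{\mathcal T}$. The covariance then scales as $\Sigma_d\mapsto\gamma^2\Sigma_d$.
\begin{itemize}
\item The NIS statistic is linear in $\Sigma_d$ (equivalently, quadratic in $\varepsilon$), so $T_{\mathrm{NIS}}(\varepsilon_\gamma)=\gamma^2T_{\mathrm{NIS}}(\varepsilon)$.
\item Each $\mathcal L_\tau$ is linear, so $\mathcal L_\tau(G_T\gamma^2\Sigma_dG_T^\top)=\gamma^2\mathcal L_\tau(G_T\Sigma_dG_T^\top)$. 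Squaring the Frobenius norm gives a factor $\gamma^4$, and the nonnegative weights $w_\tau$ do not depend on $\gamma$, so $T_{\mathrm{P}}(\varepsilon_\gamma)=\gamma^4T_{\mathrm{P}}(\varepsilon)$.
\item In the empirical version the mean $\bar\varepsilon$ scales by $\gamma$, so each $\widehat R(\tau)$ scales by $\gamma^2$ and the same conclusion holds.
\end{itemize}

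\textbf{Step 2 (sufficiency of $\gamma\le\bar\gamma$).} Both statistics are nonnegative.
\begin{itemize}
\item If $T_{\mathrm{NIS}}(\varepsilon)>0$, then $\gamma\le\sqrt{\varepsilon_{\rm cov}/T_{\mathrm{NIS}}(\varepsilon)}$ gives $\gamma^2T_{\mathrm{NIS}}(\varepsilon)\le\varepsilon_{\rm cov}$.
\item If $T_{\mathrm{NIS}}(\varepsilon)=0$, the scaled statistic is identically $0\le\varepsilon_{\rm cov}$, which matches the $+\infty$ convention in the footnote.
\item The portmanteau case is identical, using monotonicity of $t\mapsto t^4$ on $[0,\infty)$: $\gamma\le(\varepsilon_{\rm white}/T_{\mathrm{P}}(\varepsilon))^{1/4}$ implies $\gamma^4T_{\mathrm{P}}(\varepsilon)\le\varepsilon_{\rm white}$.
\end{itemize}
Because $\bar\gamma$ is the minimum of the two bounds, $\gamma\le\bar\gamma$ enforces both simultaneously.

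\textbf{Main obstacle.} The only delicate point is fixing what $T_{\mathrm{NIS}}$ and $T_{\mathrm{P}}$ mean when applied to the attack component alone. This must be the excess quantity compared with $\varepsilon_{\rm cov}$, not the full statistic including the nominal $m(T+1)$ term, otherwise the $\gamma^2$ homogeneity fails. I will state this explicitly, citing the independence argument already given (the cross term vanishes). I will also note that $\varepsilon_{\rm cov},\varepsilon_{\rm white}\ge 0$, so the roots in $\bar\gamma$ are well defined.
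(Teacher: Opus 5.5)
Your proposal is correct and follows essentially the paper's argument: both rest on the degree-$2$ homogeneity of the NIS statistic and on the $\gamma^2$ scaling of each centered autocovariance, which gives the $\gamma^4$ scaling of the portmanteau statistic; the bound on $\gamma$ then follows by inverting the two inequalities. Your extra care on three points is a useful clarification that the paper leaves implicit: treating the covariance-level forms alongside the empirical ones, handling the zero-statistic convention, and stressing that the statistics must be the attack-induced (excess) parts rather than the full $\texttt{Q}$.
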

\begin{proof}
Since $\varepsilon_\gamma=\gamma\varepsilon$, second-order statistics
scale quadratically, yielding $T_{\mathrm{NIS}}(\varepsilon_\gamma)=\gamma^2T_{\mathrm{NIS}}(\varepsilon)$.
Moreover, the autocovariances satisfy $\widehat R_\gamma[\ell]=\gamma^2\widehat R[\ell]$ for each lag
$\ell$, hence any portmanteau statistic based on sums of squared autocovariances scales as
$\gamma^4$, i.e., $T_{\mathrm{P}}(\varepsilon_\gamma)=\gamma^4T_{\mathrm{P}}(\varepsilon)$.
Substituting into the inequalities and solving for $\gamma$ gives the stated bound.
\end{proof}

\color{black}

\section{Evaluation in Maritime Navigation Scenarios}
\label{sec:maritime}
A surface vessel tracked at AIS/GNSS rates is considered with sampling time
$T_s=5\,\mathrm{s}$. The nearly constant-velocity (CV) kinematic model~\citep{Singer1970,Fossen2011} has state
\mbox{$\bm x[k] \;=\; \begin{bmatrix} p_x[k] & p_y[k] & v_x[k] & v_y[k]\end{bmatrix}^\top$}, while the input $\bm u[k]\in\mathbb{R}^2$ amounts to surge/sway accelerations and the output corresponds to the positions, i.e., $\bm y[k]\in\mathbb{R}^2$.
The exogenous input $\bm d[k]\in\mathbb{R}^2$ (fault/attack) is \emph{state-level} and acts only on the
\emph{velocity} components, while there is \emph{no measurement injection} (i.e.,$F=0$).
Thus, the vessel's dynamic model features matrices
\[
A=\begin{bmatrix}
1 & 0 & T_s & 0\\
0 & 1 & 0 & T_s\\
0 & 0 & 1   & 0\\
0 & 0 & 0   & 1
\end{bmatrix},\,\,
B=\begin{bmatrix}
\frac{T_s^2}{2} & 0\\
0 & \frac{T_s^2}{2}\\
T_s & 0\\
0 & T_s
\end{bmatrix},\,\,
\begin{matrix}
  C=\begin{bmatrix}
1&0&0&0\\[2pt]0&1&0&0
\end{bmatrix}\\[1em]
E=\begin{bmatrix}
0&0&1&0\\[2pt]
0&0&0&1
\end{bmatrix}^\top.
\end{matrix}
\]
Process and measurement noises are zero-mean Gaussian,
$\bm w[k]\sim\mathcal N(\bm 0,Q)$, $\bm v[k]\sim\mathcal N(\bm 0,R)$.
The Kalman filter is tuned with diagonal covariances: 
the process noise $Q$ injects uncertainty only in the velocity channels, representing unmodelled accelerations via a velocity random walk, 
while the measurement noise $R$ acts only on the positions, modeling GNSS/AIS jitter. 
The chosen magnitudes are consistent with moderate sea state and consumer-grade receivers, i.e., 
\[
Q \;=\; \begin{bmatrix} 0_{2\times 2} & 0_{2\times 2}\\ 0_{2\times 2} & \sigma_{\!v}^2 I_2
\end{bmatrix},
\qquad
R \;=\; \sigma_{\!y}^2\, I_2,
\]
with \mbox{$\sigma_{\!v}=10^{-2}\,\mathrm{m/s}$} and
\mbox{$\sigma_{\!y}=5\,\mathrm{m}$}, which are compatible with commercial-grade marine navigation sensors. 
Notably, $(A,C)$ is observable.

With respect to the above model, two representative navigation contexts are considered: (i) a straight transit segment, and; (ii) a gentle maneuver segment in a traffic-regulated area. 
In this view, the input $\bm u[k]$ consists of a small surge acceleration with a short sway pulse to emulate a course adjustment. 

The SDS is configured as follows. Innovations are whitened via the Cholesky factor of the innovation covariance. Four tests are computed over the full trajectory at significance $\alpha=0.05$: (a) zero-mean, with statistic $T_\mu\sim\chi^2_{m}$ ($m{=}2$); (b) NIS, summing $\|z[k]\|_2^2$ with $\nu=m(T{+}1)$ degrees of freedom; (c) Mardia’s multivariate kurtosis (two-sided normal approximation for its standardized statistic); and (d) multivariate Ljung--Box whiteness with $L=10$ lags, using the $\chi^2_{m^2L}$ reference. 

Two disturbance patterns injected at the state level through $E$ are compared:
{\bf Attack A:} a colored first-order autoregressive disturbance (AR(1)) on velocities; {\bf Attack B:}  recovered FIR–Gaussian disturbance from the proposed relaxation and subsequent recovery approach.
\color{black}For Attack A, the injected disturbance acts on both velocity channels as a vector AR(1) process \mbox{$\mathbf d_k = \omega\,\mathbf d_{k-1} + \psi\,\boldsymbol{\xi}_k$} with \mbox{$\boldsymbol{\xi}_k \sim \mathcal N(\mathbf 0, I_2)$}, where  $\omega\in(0,1)$ is the coloring parameter and $\psi>0$ the innovation scale. 

\color{black}
A design window of $T_{\rm design}{+}1=31$ samples (i.e., $155$ seconds at $T_s=5$\,s) is adopted,
for which the nominal expectation of the NIS statistic is
$\mathbb{E}[\texttt{Q}]_{\text{nom}} = m(T_{\rm design}{+}1) = 62$ for $m=2$.
For Attack~A, a colored first-order autoregressive disturbance is injected on the velocity channels,
modeled as $\mathbf d_k = \omega\,\mathbf d_{k-1} + \psi\,\boldsymbol{\xi}_k$ with
$\boldsymbol{\xi}_k \sim \mathcal N(\mathbf 0, I_2)$.
In the experiments, $\omega=0.98$ and $\psi=8\times 10^{-3}$ are chosen so that the aggregate NIS
over the injection window is not rejected at significance $\alpha=0.05$, while the temporal
dependence remains detectable by the SDS whiteness test.

\color{black}

To assess residual effects after the disturbance ends, a longer
evaluation horizon of $T{+}1=301$ samples (i.e., $25$ minutes) is simulated, while injecting
the attack only over the design window.

For Attack~B, the FIR uses $r=2$ (i.e., three taps) and the SDS whiteness test uses $L=10$ low-lag moments.

For \(m=2\), \(T_{\rm design}=30\), and \(\alpha=0.05\), this yields
\(\varepsilon_{\rm cov}\approx 19.38\), which is adopted in the experiments.
\color{black}
\textcolor{black}{Moreover,} the disturbance energy bound $\mathrm{tr}(\Sigma)\le D$ is considered with $D=5.0$, and the whiteness constraint is chosen to be consistent with the same SDS setting ($L=10$, $\alpha=0.05$, i.e., the $\chi^2_{m^2L}$ threshold).

\textcolor{black}{All experiments were executed in Python using the CVXPY optimization framework, on a laptop with an
Intel\textsuperscript{\textregistered} Core\textsuperscript{\texttrademark} i7--9750H
processor and 32\,GB of RAM. Convex programs were solved using the SCS solver with fixed
tolerances ($\texttt{eps}=10^{-6}$) and iteration limits (up to $2\times10^{5}$ iterations)
across all experiments. The covariance‑fit recovery was implemented in PyTorch and optimized with Adam, initialized from the quadratic Toeplitz projection.}

\color{black}

To compute a feasible stealthy disturbance, \textcolor{black}{the} proposed pipeline is followed based on
convex relaxation, covariance-fit recovery, and final scaling.
First, the relaxed optimization problem is solved, obtaining a solution with objective
value $J_{\rm relax}\approx 1.85\times 10^{6}$.
This value represents the maximum achievable cost over the relaxed feasible set and
serves as a reference upper bound for the subsequent structured recovery.

Next, FIR realizability is enforced by solving the covariance-fit recovery problem\footnote{\textcolor{black}{The parameter $\zeta$ is set to zero in the recovery objective to avoid biasing the fit beyond the hard NIS/whiteness constraints, which are enforced afterward via the scaling step.}}
$\min_{\mathcal T\in\mathbb T_M}\|\mathcal T\mathcal T^\top-\Sigma^\star\|_F^2$ $\text{ s.t. } \|\mathcal T\|_F^2\le D$
(using a gradient-based optimizer, Adam, initialized from the quadratic Toeplitz projection).
The resulting unscaled solution attains $J_{\rm rec}^{\rm (unscaled)}\approx 4.97\times 10^{5}$,
but it violates the detection and energy bounds.

To ensure feasibility with respect to all detection and energy constraints, finally, the scaling step in Eq.~\eqref{eq:scaling} is applied, yielding a scaling factor
$\gamma\approx 0.181$.
After scaling, the recovered disturbance satisfies all \textcolor{black}{construction constraints}
and achieves a cost of $J_{\rm rec}\approx 1.62\times 10^{4}$.
Numerically, this corresponds to $J_{\rm rec}/J_{\rm relax}\approx 8.8\times 10^{-3}$.

In accordance with Theorem~\ref{thm:nontrivial-existence} and the bound
in~\eqref{eq:relax1}, the discrepancy between the relaxed solution
$\Sigma^\star$ and its Toeplitz-structured recovery $\widehat\Sigma$
(prior to scaling) satisfies Eq.~\eqref{eq:relax1}.
In the present experiment, the observed difference is
$|J_{\rm relax}-J_{\rm rec}^{\rm (unscaled)}|\approx 1.35\times 10^{6}$,
while the right-hand side evaluates to
$\|Q_H\|_2\,\|\Sigma^\star-\widehat\Sigma\|_*\approx 1.84\times 10^{7}$.
This large margin illustrates the conservative, worst-case nature of the
nuclear-norm bound when combined with structural recovery constraints.
As a result of the proposed approach, taps
$\{M_\ell\}_{\ell=0}^{2} \subset \mathbb{R}^{2\times 2}$ (units m/s per seed unit) are obtained with
\[
M_0\!=\!\!\begin{bmatrix}
5.12\!\times\!10^{-2} & 1.14\!\times\!10^{-5}\\
-1.58\!\times\!10^{-6} & -5.12\!\times\!10^{-2}
\end{bmatrix},
\]
\[
M_1\!=\!\!\begin{bmatrix}
1.27\!\times\!10^{-2} & 8.24\!\times\!10^{-6}\\
-5.97\!\times\!10^{-6} & -1.28\!\times\!10^{-2}
\end{bmatrix},
\]
\[
\begin{aligned}
M_2=\begin{bmatrix}
2.15\!\times\!10^{-3} & 7.82\!\times\!10^{-6}\\
-7.76\!\times\!10^{-6} & -2.17\!\times\!10^{-3}
\end{bmatrix}.
\end{aligned}
\]

\color{black}
The statistical evidence corresponding to the two attacks is reported in Figure~\ref{fig:pvalsAB}, which shows the per-test $p$-values (bias, NIS, Gaussianity, whiteness). 
Attack~A is characterized by near-zero NIS and whiteness $p$-values, confirming that temporal coloring is what betrays this spoof despite its mean being near zero and its marginal shape not being strongly non-Gaussian. 
In contrast, Attack~B keeps the NIS and whiteness $p$-values well away from the rejection region, indicating that the SDS accepts this realization.

\color{black}
In the attack window, some $p$‑values may exceed their nominal counterparts, reflecting the attacker's explicit shaping of innovation statistics under the SDS constraints; this can be observed, for instance, in the mean test. This effect should be interpreted as a consequence of constrained covariance/whiteness matching rather than as inherently ``cleaner" measurements.

Notably, although the FIR is optimized on a $30$--step window, its statistics remain non-rejected over the longer evaluation horizon. Because the disturbance is injected only over the $T_{\rm design}{+}1=31$-sample attack window and set to zero afterward, full-horizon statistics are diluted by the post-attack segment; consequently, full-horizon $p$-values can be larger than the corresponding attack-window values. For Attack~B (recovered FIR), the full-horizon $p$-values remain comfortably non-rejected (mean $0.586$, NIS $0.344$, Gaussianity $0.378$, whiteness $0.820$, min-$p$ $0.344$). Attack~A, instead, remains strongly detectable due to its pronounced temporal coloring (all $p$-values are below $0.0018$ except for Gaussianity, for which a  $p$-value equal to $0.365$ is obtained).

\color{black}

Figure~\ref{fig:dispAB} plots the displacement magnitude $\|\Delta {\bm p}[k]\|_2$ for the two attacks over time, where ${\bm p}[k]=[p_x[k],p_y[k]]^\top$ is the vessel position on the plane.
\color{black}
Attack~A exhibits a clear and growing drift consistent with strong colored excitation, reaching a peak displacement of about $1.13\times 10^{3}$\,m (at $t=1500$\,s), whereas Attack~B produces a much smaller but still noticeable deviation, peaking around $2.09\times 10^{2}$\,m (at $t=1375$\,s).

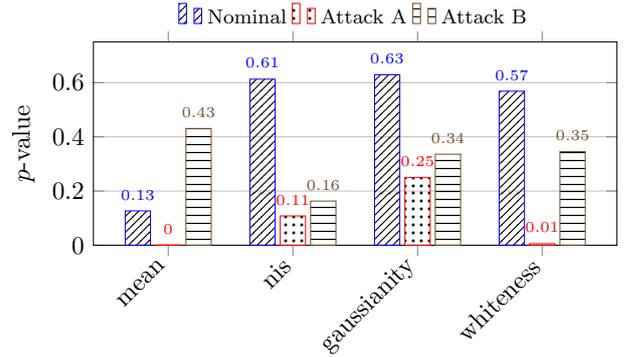
\begin{figure}[t]
  \centering
  \resizebox{0.95\columnwidth}{!}{
  \input{fig1.tex}
  }
  \vspace{-2mm}
  \caption{Per-test $p$-values for the nominal case, Attack~A and Attack~B, evaluated at $t=150 s$ (i.e., at the end of the injection horizon). }
  \label{fig:pvalsAB}
\end{figure}

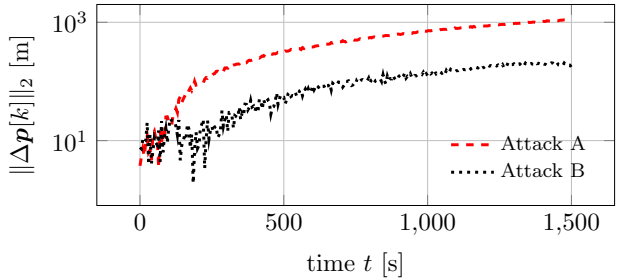
\begin{figure}[t]
  \centering
  \resizebox{0.95\columnwidth}{!}{
  \input{fig2.tex}
  }
  \vspace{-2mm}
  \caption{Displacement $\|\Delta {\bm p}[k]\|_2$ over time relative to nominal. Dashed: Attack A; dotted: Attack B.}
  \label{fig:dispAB}
\end{figure}

The maritime evaluation highlights two complementary aspects of the detection problem. 
When disturbances are colored, as in the AR(1) spoof of Attack~A, the trajectory quickly develops a visible drift. 
Although the aggregate NIS remains comparable to its nominal budget, the strong temporal correlation betrays the attack: both the mean and whiteness tests push the SDS to reject this pattern. 
In contrast, the recovered FIR disturbance of Attack~B demonstrates the opposite compromise. 
Here the deviation from nominal is smaller, yet statistically much harder to distinguish: the per-test $p$-values remain comfortably within the acceptance region. 
This attack therefore passes undetected while still biasing the trajectory in a subtle but noticeable way. 
Together, these results illustrate the practical damage–stealth trade-off: aggressive coloring yields larger effects but is detectable, while carefully shaped FIR disturbances retain stealth at the price of smaller impact. 
Most importantly, the case study confirms that considering multiple tests into the SDS decision substantially improves resilience over $\chi^2$ NIS-only checks, which would have missed the AR(1) spoof entirely.

\section{Conclusions}
\label{sec:conclusions}
This work presented an innovation-based diagnostic suite that unifies mean, NIS,
Gaussianity, and whiteness checks. The maritime evaluation illustrates its behavior
under realistic navigation conditions. Results show that standard $\chi^2$ monitoring
of the NIS alone can be bypassed by temporally colored disturbances, whereas the
proposed SDS reliably exposes such anomalies. Conversely, disturbances generated
via the relaxation–recovery pipeline demonstrate that stealth can be engineered,
yielding smaller yet statistically consistent deviations.

\color{black}
From a practical standpoint, the main advantage of the proposed SDS lies in its
lightweight, model-consistent integration within existing KF-based navigation
pipelines, while providing interpretable, per-property statistical evidence.
At the same time, the approach inherits intrinsic limitations of innovation-based
monitoring: detection is statistical (not deterministic), impact under stealth
constraints is necessarily bounded, and finite-horizon tests may lose sensitivity
to very slow or long-term deviations.
\color{black}

Together, these findings clarify both the robustness and the limits of innovation-based
monitoring in maritime estimation and point toward extensions to longer horizons,
richer dynamics, and integration with higher-level surveillance tools\textcolor{black}{. In doing so, they advance} the reliability and transparency of maritime anomaly detection in support
of safer and more resilient vessel operations.

\bibliography{references}

\end{document}

%% file: fig1.tex
\begin{tikzpicture}
\begin{axis}[
  width=\linewidth,
  height=0.5\linewidth,
  ybar=2pt,
  bar width=10pt,
  ylabel={$p$-value},
  symbolic x coords={mean,nis,gaussianity,whiteness,fused},
  xtick=data,
  xticklabel style={rotate=45, anchor=east},
  ymin=0, ymax=0.75,
  ymajorgrids,
  legend style={at={(0.5,1.02)},anchor=south,draw=none,fill=none,legend columns=3,font=\small},
  nodes near coords,
  every node near coord/.append style={
    font=\tiny,
    /pgf/number format/fixed,
    /pgf/number format/precision=2,
    yshift=1pt
  },
  enlarge x limits=0.20
]
\addplot+[pattern=north east lines] coordinates {(mean, 0.126710) (nis, 0.612805) (gaussianity, 0.629128) (whiteness, 0.568511)};
\addlegendentry{Nominal}
\addplot+[pattern=dots] coordinates {(mean, 0.0005310556) (nis, 0.1080924170) (gaussianity, 0.2503607363) (whiteness, 0.0055026161)};
\addlegendentry{Attack A}
\addplot+[pattern=horizontal lines] coordinates {(mean, 0.430830) (nis, 0.162932) (gaussianity, 0.335950) (whiteness, 0.345465)};
\addlegendentry{Attack B}

\end{axis}
\end{tikzpicture}

%% file: fig2.tex

\begin{tikzpicture}
\begin{axis}[
  width=\linewidth,
  height=0.5\linewidth,
  ymode=log,
ymin=0.8,
ymax=2e3,
ytickten={1,3,5},
  xlabel={time $t$ [s]},
  ylabel={$\|\Delta {\bm p}[k]\|_2$ [m]},
  grid=both,
  legend style={at={(0.97,0.07)},anchor=south east,draw=none,fill=none,font=\small},
  legend cell align=left
]

\addplot+[red,no marks, dashed, very thick]
  table[x=time_s, y=disp_attack_A_m, col sep=comma]
  {DATAattackA.csv};
\addlegendentry{Attack A}

\addplot+[black,no marks, dotted, very thick]
  table[x=time_s, y=disp_attack_C_m, col sep=comma]
  {DATAattackB.csv};
\addlegendentry{Attack B}

\end{axis}
\end{tikzpicture}